\newcommand{\eref}[1]{\hyperref[#1]{E\ref*{#1}}}
\newcommand{\cref}[1]{\hyperref[#1]{C\ref*{#1}}}
\newcommand{\pref}[1]{\hyperref[#1]{P\ref*{#1}}}
\theoremstyle{remark}
\newtheorem{subcase}{Case}[case]
\theoremstyle{definition}
\newtheorem{observation}{Observation}
\let\oldnl\nl
\newcommand{\nonl}{\renewcommand{\nl}{\let\nl\oldnl}}
\title{Silent Self--Stabilising Leader Election in Programmable Matter Systems with Holes}
\author{Jérémie Chalopin\inst{1}\orcidID{0000-0002-2988-8969} \and
Shantanu Das\inst{1}\orcidID{0000-0003-4008-2445} \and
Maria Kokkou\inst{2} \thanks{Most of this work was done while the author was affiliated with Aix-Marseille University} \thanks{Corresponding author: \email{maria.kokkou@uni-paderborn.de}}\orcidID{0009-0009-8892-3494}}
\authorrunning{J. Chalopin et al.}
\institute{
Aix Marseille Univ, CNRS, LIS, Marseille, France\\
\email{\{jeremie.chalopin,shantanu.das\}@lis-lab.fr}\and
Paderborn University, Paderborn, Germany\\
\email{maria.kokkou@uni-paderborn.de}
}
\begin{document}

\maketitle

\begin{abstract}
    Leader election is a fundamental problem in distributed computing, particularly within programmable matter systems, where coordination among simple computational entities is crucial for solving complex tasks. In these systems, particles (i.e., constant-memory computational entities) operate in a regular triangular grid as described in the geometric Amoebot model. While leader election has been extensively studied in non self--stabilising settings, self--stabilising solutions remain more limited. In this work, we study the problem of self--stabilising leader election in connected (but not necessarily \emph{simply} connected) configurations. We present the first self--stabilising algorithm for connected programmable matter systems that guarantees the election of a unique leader under an unfair scheduler, for oblivious particles (i.e., particles with no persistent memory) that share a common sense of direction. Our approach leverages particle movement, a capability not previously exploited in the self-stabilising context. We show that movement in conjunction with particles sharing a sense of orientation and operating in a grid can overcome classical impossibility results for constant-memory systems established by Dolev, Gouda and Schneider (1999).
\end{abstract}

\keywords{Leader Election \and Programmable Matter \and Self-Stabilisation \and Silent \and Deterministic \and Unique Leader \and Agreement on Directions \and Holes \and Oblivious}

\section{Introduction}
    Programmable Matter involves large collections of simple computational entities, called particles, that can change their physical properties (e.g., shape) in a programmable way and need to collaboratively accomplish a given task in a geometric environment. In this work, we assume the geometric environment to be a regular triangular grid. One of the central objectives of these systems is to be able to form any desired configuration from an arbitrary initial configuration in an efficient way with respect to time, energy and computational power. Leader election can be used as an intermediate step to designing robust algorithms for more complex problems, such as the one mentioned before, by electing a particle that can coordinate the system and break symmetries. Leader election introduced in \cite{LeLann77} is a classical problem in distributed computing, often addressed under the assumption that each node has a unique identifier. In the case of unique identifiers, it is easy to see that if nodes can exchange information, the node with the smallest or the greatest identifier can be elected. In our work, as particles have constant memory, we cannot assume unique identifiers. Leader election in anonymous systems (i.e., networks where nodes do not have unique identifiers) is impossible to solve without additional assumptions due to symmetries~\cite{angluin1980local}. To overcome this constraint, one solution is to employ randomisation \cite{itai1981symmetry} or to characterise networks where the problem can be solved \cite{yamashita1988computing}. In this paper, we use a different combination of assumptions, namely common sense of direction and movement capabilities, to elect a unique leader.
    
    Even though the large scale of programmable matter systems increases the likelihood of faults occurring and makes fault tolerance even more critical, fault-tolerant approaches to problems remain limited. In particular, self--stabilisation which is a broad way to model diverse faults such as memory corruption, particle crashes and failures in communication resulting in an arbitrarily initialised configuration, has been largely overlooked in previous work. An algorithm is self--stabilising if, from any arbitrary initial configuration, every execution of the algorithm reaches a valid configuration (whose precise definition depends on the problem) in finite time and all subsequent configurations are valid. To the best of our knowledge, \cite{chalopin2024selfstabilising} gives the only self--stabilising algorithm for programmable matter. The problem considered therein is leader election in simply connected systems, under a Gouda fair scheduler \cite{Gouda2001theory}. A Gouda fair scheduler imposes that for every configuration that appears infinitely often during an execution of an algorithm, every possible successor configuration must also appear infinitely often and is the strongest kind of fairness per \cite{dubois2011taxonomy}. A particle configuration $\mathcal{P}$ occupying a subset of nodes of an infinite regular triangular grid $G_\Delta$ is said to be \emph{simply connected} if it is connected and $G_\Delta \backslash \mathcal{P}$ is also connected. In connected systems, we call connected components of $G_\Delta \backslash \mathcal{P}$ that are surrounded by particles, \emph{holes}. The aim of this paper is to give a self--stabilising approach for the leader election problem in programmable matter systems that are connected but not necessarily simply connected, complementing the results of \cite{chalopin2024selfstabilising} for simply connected systems without holes. Our method can directly be combined with any \emph{stationary} (i.e., when particles do not have movement capabilities) self--stabilising algorithm that requires a unique leader. As we consider oblivious particles without states or persistent memory, our algorithm does not have an explicit leader state but elects a leader based on local conditions that we formally define in Section \ref{sec:movement-algo}. Informally, we show that eventually there exists exactly one particle that is locally lowermost and rightmost, which implies that this particle is globally lowermost and rightmost. We define this particle to be the leader. Our algorithm can be trivially modified to include a marked leader (see Observation~\ref{obs:marked-leader}). 

    We study self--stabilising leader election in connected programmable matter systems embedded in a regular triangular grid, where each node is incident to six edges labelled from 0 to 5. When the particles are stationary, this problem is significantly more difficult for arbitrary connected configurations compared to the simply connected setting presented in \cite{chalopin2024selfstabilising}. For example, suppose that the particles do not agree on orientation and the system is not simply connected. Then it is possible to construct a cyclic configuration where all particles have the same local information (e.g., Figure~\ref{fig:same-view}). In this setting, the results of \cite{dolev1999memory} determine that silent (i.e., all particles eventually stop performing any actions, such as updating their internal variables) self--stabilising leader election cannot be solved with constant memory. In this work, the algorithm being silent implies that particles eventually stop moving.
    \begin{figure}[ht]
        \centering
        \includegraphics[scale=.45]{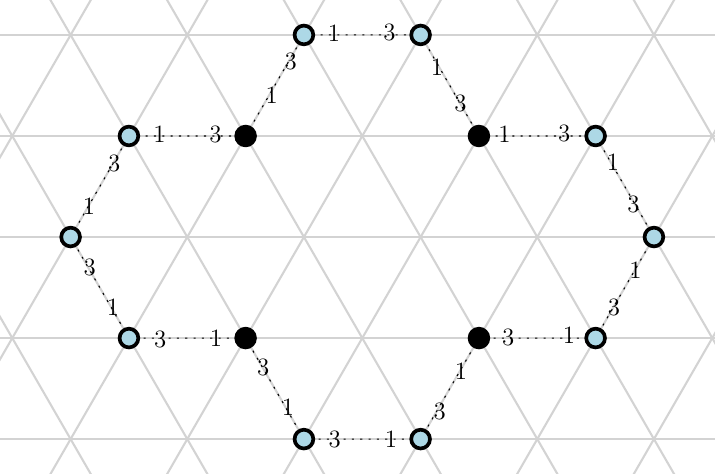}
        \caption{A particle configuration where all particles share identical local information. Blue and black particles have opposite chiralities (i.e., senses of rotational orientation). Numbers denote (consecutive) edge labels from 0 to 5.}
        \label{fig:same-view}
    \end{figure}  
    To overcome the impossibility result of \cite{dolev1999memory} we assume that particles agree on all directions. We consider the problem within the geometric Amoebot model and present a solution that uses movement, leading to a less general version of self--stabilising leader election in connected configurations. As proved in \cite{derakhshandeh2015leader}, moving cannot maintain connectivity in a self--stabilising context for programmable matter, without additional assumptions. This is due to the fact that particles may all start in an expanded state (i.e., occupying two neighbouring nodes) and immediately be instructed to contract (i.e., occupy one node), disconnecting the system with no way of reconnecting, while the system is not in a valid state. Therefore, we need to assume that particles can \textit{sense} their surroundings in a way that the information on whether neighbouring nodes are occupied by other particles or empty cannot be corrupted. 

\subsection{Related Work}
Programmable matter was introduced in \cite{toffoli1991programmable} and has since been widely studied both within a theoretical context and a practical one (e.g., \cite{piranda2018geometrical,piranda2022datom}). In this work we focus on its theoretical aspect, for which multiple models have been defined. One of the most popular active models (i.e., when the computational entities are also the building blocks of the system) is the \textit{Amoebot model} \cite{AM-derakhshandeh2014amoebot,daymude2023canonical} that has also been the basis for SILBOT~\cite{d2020asynchronous} and the Reconfigurable Circuits extension~\cite{feldmann2021accelerating}. Multiple other models exist such as \cite{almethen2020pushing,fekete2021cadbots,gmyr2017forming,woods2013active}. We focus on leader election results within Amoebot which is the model used in this work.

\begin{table*}[htp]
    \centering
    \caption{Deterministic Leader Election in triangular grids. ``Simply Connected'' denotes particle systems without \textit{holes}. ``Chirality'' is a common sense of rotational orientation. ``Movement'' is the ability of particles to move to neighbouring nodes. A ``Sequential (Seq.) Scheduler'' activates one particle at a time. An ``Asynchronous (Asynch.) Scheduler'' activates particles independently. A ``Fair'' scheduler eventually activates all activable particles. An ``Unfair'' scheduler activates a number of activable particles. Under a ``Gouda Fair'' scheduler, every successor of a recurring configuration is recurring. 
    $^{*}$Agreement on the directions along one axis.
    $^{\dag}$Agreement on all directions, which implies chirality.
    }
    \label{tab:previous-LE}
    \renewcommand{\arraystretch}{1}  
    \resizebox{\textwidth}{!}{%
    \begin{tabular}{>{\centering\arraybackslash}p{2.5cm}|
                    >{\centering\arraybackslash}p{2cm}|
                    >{\centering\arraybackslash}p{2.5cm}|
                    >{\centering\arraybackslash}p{2cm}|
                    >{\centering\arraybackslash}p{2.2cm}|
                    >{\centering\arraybackslash}p{2.7cm}|
                    >{\centering\arraybackslash}p{2.5cm}}
        \toprule
        \textbf{Paper} & 
        \textbf{Leaders} & 
        \textbf{Simply Connected} & 
        \textbf{Chirality} & 
        \textbf{Movement} & 
        \textbf{Scheduler} & 
        \textbf{Self-Stabilising} \\
        \midrule
        \cite{dufoulon2021efficient} & 6 & \texttt{X} & \checkmark & \texttt{X} & Seq. Fair & \texttt{X}\\
        \hline
        \cite{di2020shape} & 3 & \checkmark & \texttt{X} & \texttt{X} & Asynch. Fair & \texttt{X}\\ 
        \hline
        \cite{bazzi2019stationary} & 6 & \texttt{X} & \checkmark & \texttt{X} & Asynch. Fair & \texttt{X}\\  
        \hline
        \cite{gastineau2019distributed} & 1 & \checkmark & \checkmark & \texttt{X} & Seq. Fair & \texttt{X} \\ 
        \hline
        \cite{emek2019deterministic} & 1 & \texttt{X} & \texttt{X} & \checkmark & Seq. Fair & \texttt{X} \\
        \hline
        \cite{dufoulon2021efficient} & 1 & \texttt{X} & \checkmark & \checkmark & Seq. Fair & \texttt{X} \\
        \hline
        \cite{briones2023asynchronous} & 1 & \checkmark & \texttt{X} & \texttt{X} & Seq. Fair & \texttt{X}\\
        \hline
        \cite{chalopin2024deterministic} & 1 & \texttt{X} & \texttt{X}$^*$ & \texttt{X} & Asynch. Fair & \texttt{X} \\
        \hline
        \cite{chalopin2024selfstabilising} & 1 & \checkmark & \texttt{X} & \texttt{X} & Seq. Gouda Fair & \checkmark \\
        \hline
        \textbf{This Work} & 1 & \texttt{X} & \checkmark$^{\dag}$ & \checkmark & Seq. Unfair & \checkmark \\
        \toprule
    \end{tabular}
    }
  \end{table*}
Leader election is a very well studied problem within Amoebot, with results under various assumptions. It has been studied both in the three-dimensional setting (e.g., \cite{briones2023asynchronous,gastineau2022leader}) and in two dimensions. In the 2D case, both randomised (e.g., \cite{derakhshandeh2015leader}) and deterministic algorithms have been proposed. In the deterministic approach particles are often assumed to agree on rotational orientation (known as particles having common \textit{chirality}) like for example in \cite{bazzi2019stationary,dufoulon2021efficient,gastineau2019distributed}. In \cite{chalopin2024deterministic} the authors assume agreement on one axis of the triangular grid and show that this assumption is not comparable to agreement on chirality. Earlier work on leader election, as for example \cite{dufoulon2021efficient,emek2019deterministic,gastineau2019distributed}, assumes a sequential scheduler (i.e., one particle is active at any time). However, the problem has also been studied under an asynchronous scheduler in \cite{bazzi2019stationary,chalopin2024deterministic,di2020shape}. In terms of fairness, all previous leader election algorithms assume a fair scheduler (i.e., a scheduler that eventually activates every particle that can be activated). However, although this is not stated explicitly, we believe that most algorithms, especially \cite{dufoulon2021efficient,gastineau2019distributed,chalopin2024deterministic}, also work under an unfair scheduler (i.e., at any time at least one particle that
can be activated is chosen but some particles may be perpetually ignored as long as another activable particle exists in the system). Most results in the literature consider stationary particles, with the exception of \cite{dufoulon2021efficient,emek2019deterministic} which use the movement capabilities of programmable matter. 
Finally, \cite{di2020shape,gastineau2019distributed} consider simply connected particle configurations whereas \cite{bazzi2019stationary,chalopin2024deterministic,dufoulon2021efficient,emek2019deterministic} allow for the particle system to contain holes. The results for the two-dimensional deterministic case in the Amoebot model are also summarised in Table \ref{tab:previous-LE}.

In the same model, \cite{chalopin2024selfstabilising} studied self-stabilising leader election in the context of programmable matter for the first time, demonstrating that geometry can be leveraged to overcome the impossibility results established for general graphs in \cite{dolev1999memory}. In particular, \cite{chalopin2024selfstabilising} gives a deterministic self--stabilising leader election algorithm for particles with constant memory when the particle system is simply connected and particles are activated by a Gouda fair scheduler. To the best of our knowledge, \cite{chalopin2024selfstabilising} is the only paper studying self--stabilisation in the context of programmable matter. Some earlier work also discusses self--stabilisation in programmable matter but in slightly different settings. In \cite{derakhshandeh2015leader}, the authors discuss the possibility of making their randomised leader election algorithm self-stabilising by combining it with techniques from \cite{awerbuch1994memory,itkis1994fast}. However, in that case, it is assumed that particles have $O(\log^*n)$ memory, where $n$ is the number of particles in the system, bypassing the constant memory constraint of programmable matter systems. More recently, \cite{daymude2021bio} introduced a deterministic self-stabilising algorithm for constructing a spanning forest for particles with constant memory. However, in that case it is assumed that at least one non-faulty special particle always remains in the system, making the design of self--stabilising leader election algorithms particularly important. Hence the field of self--stabilisation within programmable matter systems remains largely unexplored as highlighted in \cite{daymude2023canonical}.


Within the SILBOT model, Navarra and Piselli~\cite{navarra2023asynchronous} introduced the idea of moving particles with a lower neighbour downwards until the particle system forms a single line. While their algorithm also yields a unique lowermost rightmost particle, it is not self--stabilising. Their approach assumes that every particle is initially contracted, movement is done via expansions and contractions and particles are allowed to become disconnected during execution. In a self--stabilising setting, every particle state allowed by the algorithm is a potential state of a particle in the initial configuration. Hence, in the self-stabilising context, the algorithm of \cite{navarra2023asynchronous} would allow particles to be initially expanded and for the system to initially be disconnected. If all particles in the system are initially expanded, instructing them to contract may cause disconnection. Moreover, if the system is initially disconnected, the particles have no way of knowing or reconnecting. Therefore, although we adopt the idea of moving particles with a lower neighbour down, directly using the algorithm of \cite{navarra2023asynchronous} is not possible. We address the aforementioned challenges by introducing atomic moves that consist of at most one contraction and at most one expansion, as well as a set of conditions that preserve connectivity at every step. Finally, \cite{navarra2023asynchronous} relies on additional assumptions such as two hop visibility, which we do not assume.

\subsection{Our Contributions}\label{sec:contributions}
We present a silent, self-stabilising leader election algorithm that deterministically ensures that in any arbitrarily initialised connected system, there eventually exists a unique locally lowermost rightmost particle, which is defined to be the leader. Being silent is in general a desirable property of self-stabilising algorithms as it implies smaller communication bandwidth~\cite{dolev1999memory}. In the context of our work, where particles do not communicate with each other, our algorithm being silent implies that particles eventually stop moving. This property allows our algorithm to be combined with other stationary self--stabilising algorithms. In order to overcome the additional difficulties introduced in a connected system instead of a simply connected one, we use atomic moves consisting of at most one contraction and at most one expansion. Algorithm \ref{alg:LE-movement} is the first self--stabilising algorithm that works in a connected configuration (instead of a simply connected configuration like \cite{chalopin2024selfstabilising}). Furthermore, this is the first self--stabilising algorithm for programmable matter that works under an unfair scheduler. In order to achieve this, we assume that particles agree on all directions of the grid and are activated sequentially. The latter assumption means that only one particle is active at a time, which is a standard assumption in programmable matter systems. We additionally assume that particles are oblivious, making our algorithm the first self--stabilising leader election algorithm for programmable matter where particles do not need persistent memory. Our algorithm (Section \ref{sec:movement-algo}) only depends on an active particle detecting which nodes in its neighbourhood are occupied by other particles, but particles cannot exchange any additional information. In particular, particles cannot exchange messages or read the state of neighbouring particles like in previous work within programmable matter.

\section{Model and Preliminaries}
Let $G_\Delta$ be an infinite regular triangular grid. We assume the particle system $\mathcal{P}$ occupies a connected subset of nodes in $G_\Delta$. Each computational entity in $\mathcal{P}$ is called a \textit{particle}. We assume that each particle is \textit{oblivious} (i.e., has no memory), has no communication capabilities and can be either \textit{contracted} occupying one node or \textit{expanded} occupying two neighbouring nodes. Notice that the state of the particle is thus only defined by whether it is expanded or contracted. Each node can be occupied by at most one particle at any time. We call the two endpoints of an expanded particle the \textit{head} and \textit{tail} of the particle. A contracted particle $p$ is incident to six ports, each corresponding to one of the six neighbouring nodes of $p$ in $G_\Delta$, which are arranged consecutively in cyclic order around $p$. An expanded particle is incident to eight ports, each corresponding to one of the eight nodes around it in $G_\Delta$. We define the neighbourhood of each endpoint of an expanded particle to be the same as the neighbourhood of a contracted particle, that is, to be the six nodes reachable by ports. We consider the neighbourhood of an expanded particle to be the union of the neighbourhood of the two endpoints of the particle. This means that we consider the tail (resp. head) of an expanded particle to be part of the neighbourhood of the head (resp. tail) of the particle. We write \textit{occupied neighbourhood} of a particle $p$ or $N(p)$ to denote the nodes that are neighbouring to $p$ and are occupied by particles. We use the notation $N[p] = N(p) \cup \{p\}$ to refer to the occupied neighbourhood of $p$ and the node(s) occupied by $p$. We call a particle in the occupied neighbourhood of $p$ a \textit{neighbour} of $p$. Finally, we assume that a particle $p$ knows whether two adjacent nodes in $N(p)$ are occupied by one expanded particle or by two different particles. Notice that if $p$ is adjacent to only one endpoint of an expanded particle it does not know whether the neighbouring particle is expanded or contracted. We discuss the importance of this assumption in Section \ref{sec:movement-algo}. 

Within the Amoebot model, particles move using two main operations: \textit{expansion} to a neighbouring node and \textit{contraction} to the node occupied by the head of the particle. An expansion is only possible for a contracted particle and a contraction is only possible for an expanded particle. Contrary to Amoebot, here we assume that particles have two \textit{contract} operations: \textit{contract to head} and \textit{contract to tail}. The new \textit{contract to tail} operation does not alter the model significantly as it is equivalent to an expanded particle contracting to head, expanding in the opposite direction and contracting to head a second time. 

We say that a particle for which some condition is enabled is \textit{activable}. We call an activable particle that is chosen by the scheduler \textit{activated}. When a particle is activated, it detects which of its neighbouring nodes are occupied, whether any of those nodes are occupied by an expanded particle and, based on this information, it performs at most one contraction and at most one expansion, in this order. Operations performed during a single activation of the particle are called a \textit{move} and a move is assumed to be atomic. That is, when a particle is activated it performs all operations that are part of the move before becoming inactive and while a particle is active no other particle becomes activated. Notice that since a move consists of at most one contraction and at most one expansion, at least one of the nodes the particle occupied before its activation remains occupied by the same particle after its activation. Atomic moves are particularly important in this context, as they guarantee that the system remains connected after every \textit{move}. Without this assumption, the impossibility result of \cite{derakhshandeh2015leader} forbids movement within the self--stabilising context. 

We assume that all particles have a common sense of orientation (i.e., all particles agree on all directions), as shown in Figure \ref{fig:compass}.
\begin{figure}[t]
    \centering
    \begin{tikzpicture}[scale=.6]

        \draw[-Stealth] (0, 0) -- (60:1.5cm) node[label={[label distance=.1mm]60:\footnotesize 5}] {}; 
        \draw[-Stealth] (0, 0) -- (120:1.5cm) node[label={[label distance=.1mm]120:\footnotesize 4}] {}; 
        \draw[-Stealth] (0, 0) -- (180:1.5cm) node[label={[label distance=.1mm]180:\footnotesize 3}] {}; 
        \draw[-Stealth] (0, 0) -- (240:1.5cm) node[label={[label distance=.1mm]240:\footnotesize 2}] {}; 
        \draw[-Stealth] (0, 0) -- (300:1.5cm) node[label={[label distance=.1mm]300:\footnotesize 1}] {}; 
        \draw[-Stealth] (0, 0) -- (0:1.5cm) node[label={[label distance=.5mm]0:\small 0}] {};

    \end{tikzpicture}
    \caption{\centering Directions of particles}
    \label{fig:compass}
\end{figure}
We define a \textit{step} of the execution to be the time needed for a particle to become activated, execute a given algorithm once and become deactivated. We assume that particles are activated by a \textit{sequential unfair} scheduler. A sequential scheduler refers to the synchronicity of the system and guarantees that only one particle is active at any step and is a usual assumption within programmable matter systems (e.g., \cite{dufoulon2021efficient,emek2019deterministic,gastineau2019distributed}). An unfair scheduler concerns the fairness of activations and represents the weakest fairness condition per \cite{dubois2011taxonomy}: at each step, some activable particle is activated. In contrast to a fair scheduler, it does not require that every activable particle is eventually activated, and thus it is possible that any number of activable particles are perpetually ignored as long as at least one other activable particle exists in the system. 

The pseudocode is composed of a set of states which in turn are composed of a set of conditions and actions. Each line of the algorithm is of the form $c_i : a_1, \ldots, a_m$ where $c_i$ is the $i$-th condition and each $a_j \in \{a_1, \ldots, a_m\}$ is an action. Every time a particle is activated, it evaluates the conditions in its current state in order and if a condition is satisfied the particle performs the corresponding actions and finishes the round without evaluating the remaining conditions. 

\section{Silent Self--Stabilising Leader Election with Movement}\label{sec:movement-algo}
The algorithm we give follows the core idea presented in \cite{navarra2023asynchronous}. That is, the goal is to move particles that have a lower neighbour downwards if an empty lower node exists. However, in this work we also assume that any number of particles may initially be expanded and particles do not have 2--hop visibility. Since the labels of \textit{head} and \textit{tail} are usually part of the memory and we consider oblivious particles, every time an expanded particle is activated it assigns its head to be the end occupying the lower node if one exists, otherwise it assigns the head to be the end occupying the rightmost node. This, although not part of the Amoebot model, is not too restrictive as any particle for which the above orientation is not true could contract and expand in the opposite direction to achieve the described positioning. 

Informally, the algorithm is the following. A contracted particle that has exactly one lower neighbour (i.e., direction 1 or 2 of Figure \ref{fig:compass}) expands downwards without disconnecting its occupied neighbourhood. Any contracted particle that has either zero or two occupied lower neighbouring nodes and that has an upper neighbour to the right (i.e., in direction 5 of Figure \ref{fig:compass}), expands to the right (i.e., direction 0). An expanded particle contracts to its head as long as it does not disconnect its neighbours. If an expanded particle with at least one lower neighbour cannot contract without disconnecting its occupied neighbourhood, it tries to improve its position. This is done by remaining expanded and either moving one of its endpoints to a lower position that does not disconnect its neighbours or by moving the tail to a node that allows a different particle to move without disconnecting the system. An example of the former case is shown in Figure \ref{fig:expanded-cond-one-empty}. A central step in proving the correctness of our algorithm (i.e., Lemma \ref{lem:unique-leader}) is showing that:

\begin{framed}
    \centering When executing this algorithm there eventually exists exactly one particle that has no lower neighbours and no neighbour to the right (i.e., in directions 0, 1, 2 and 5). This particle is defined to be the leader.
\end{framed}

\begin{figure}[ht]
    \begin{subfigure}[b]{0.23\textwidth}
        \centering
        \includegraphics[scale=.6,page=1]{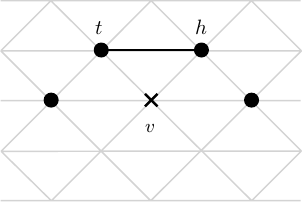}
        \caption{\centering}
    \end{subfigure}\hfill
    \begin{subfigure}[b]{0.23\textwidth}
        \centering
        \includegraphics[scale=.6,page=2]{expanded-cond-one-empty.pdf}
        \caption{\centering}
    \end{subfigure}\vspace{.3cm}
    \begin{subfigure}[b]{0.23\textwidth}
        \centering
        \includegraphics[scale=.6,page=3]{expanded-cond-one-empty.pdf}
        \caption{\centering}
    \end{subfigure}\hfill
    \begin{subfigure}[b]{0.23\textwidth}
        \centering
        \includegraphics[scale=.6,page=4]{expanded-cond-one-empty.pdf}
        \caption{\centering}
    \end{subfigure}
    \caption{An expanded particle (symbolised by two circles connected with a line) that cannot contract without disconnecting its neighbours (depicted as black circles). The tail is denoted by $t$ and the head by $h$. Nodes marked by \texttt{x} are assumed to be empty.}
    \label{fig:expanded-cond-one-empty}
\end{figure}

We now describe how a particle moves without disconnecting its neighbourhood. Let $p$ be an expanded particle and let $h$ and $t$ be the head and tail of $p$ respectively. Call $v$ the lower common neighbouring node of $h$ and $t$. If $p$ is expanded diagonally, call $w$ the higher common neighbour of $h,t$. If for every particle in $N(t)$ there exists a path to $h$ that only passes through occupied nodes in $N(p) \backslash \{t\}$, $p$ contracts to $h$ without disconnecting its neighbourhood. Otherwise, if $v$ is empty and for every particle in $N(t)$ there exists a path to $v$ that only passes through $v$ and occupied nodes in $N[h] \backslash \{t\}$, $p$ contracts to head and expands to $v$ so that $p$ occupies $v$ and the original node of $h$. An example of this case is shown in the first three subfigures of Figure \ref{fig:expanded-cond-one-empty}. If $p$ is expanded horizontally, $v$ is empty and for every particle in $N(h)$ there exists a path to $v$ that only passes through $v$ and occupied nodes in $N[t] \backslash \{h\}$, $p$ contracts to tail and expands to $v$ so that $p$ occupies $v$ and the original node of $t$. An example of this case is shown in the fourth subfigure of Figure \ref{fig:expanded-cond-one-empty}. When no such $v$ that preserves connectivity exists, if $p$ is expanded diagonally and $t$ is the lower common neighbour of a horizontally expanded particle, $p$ contracts to head and expands to $w$ so that $p$ occupies $w$ and the original node of $h$.

\subsection*{Pseudocode}

We use the following conditions in state \texttt{Expanded}:
\begin{enumerate}[E1]
    \item\label{cond:expanded-contract} Contracting to head does not disconnect neighbours. For example, the blue expanded particle in Figure \ref{fig:rules}.
    \item\label{cond:expand-lower-head} There exists a lower empty common neighbour, $v$, such that $N(p) \backslash \{t\} \cup \{v\}$ is connected. For example, the black square expanded particle in Figure \ref{fig:rules}.
    \item \label{cond:expand-lower-tail} The particle $p$ is horizontally expanded and there exists a lower empty node, $u$, that is neighbouring to the tail such that $N(p) \backslash \{h\} \cup \{u\}$ is connected. For example, the white circle expanded particle in Figure \ref{fig:rules}.
    \item\label{cond:expand-other-diagonal} The tail is the lower common neighbour of an expanded particle and the higher common neighbour of the head and tail, $w$, is empty. For example, the white square expanded particle in Figure \ref{fig:rules} and the pink expanded particle in Figure \ref{fig:pendulum}.
\end{enumerate}

We use the following conditions in state \texttt{Contracted}:
\begin{enumerate}[C1]
    \item \label{cond:contracted-expand} Out of the two lower neighbours one is occupied and one is empty. For example, the blue contracted particle in Figure \ref{fig:rules}.
    \item \label{cond:contracted-lowermost-right} Both lower neighbours are empty or both lower neighbours are occupied, there exists an occupied neighbour in direction 5 and the right neighbouring node (i.e., in direction 0) is empty. For example, the grey contracted particle in Figure \ref{fig:rules}.
\end{enumerate}
\begin{figure}[ht]
    \begin{subfigure}[t]{.49\textwidth}
        \centering
        \includegraphics[scale=.5,page=1]{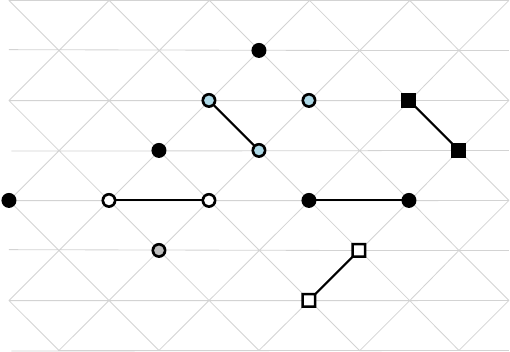}
        \caption{\centering}
        \label{fig:rules}
    \end{subfigure}\hfill
    \begin{subfigure}[t]{.49\textwidth}
        \centering
        \includegraphics[scale=.5,page=2]{rules.pdf}
        \caption{\centering}
        \label{fig:pendulum}
    \end{subfigure}
    \caption{(a) An example configuration containing particles that satisfy all conditions. The blue expanded particle satisfies \eref{cond:expanded-contract}, the black square expanded particle satisfies \eref{cond:expand-lower-head}, the white circle expanded particle satisfies \eref{cond:expand-lower-tail}, the white square expanded particle satisfies \eref{cond:expand-other-diagonal}, the blue circle contracted particle satisfies \cref{cond:contracted-expand} and the grey contracted particle satisfies \cref{cond:contracted-lowermost-right}. (b) A configuration where the only satisfied condition is \eref{cond:expand-other-diagonal} (for the pink expanded particle).}
    \label{fig:rules-and-pendulum}
\end{figure}

In the pseudocode, we write $h$ to denote the node occupied by the head of an expanded particle and $t$ to denote the node occupied by the tail of an expanded particle. Nodes $v,u,w$ are those defined in the conditions \eref{cond:expand-lower-head}, \eref{cond:expand-lower-tail} and \eref{cond:expand-other-diagonal} respectively. The comments in the pseudocode are the nodes occupied by the particle after its activation.

\begin{algorithm}[ht]
    \caption{Silent and Self--Stabilising Leader Election}
    \label{alg:LE-movement}

    \nonl\underline{In state \texttt{Expanded}:}\\
    \quad mark the lowest endpoint (or rightmost endpoint if both endpoints are in the \\ 
    \nonl\quad same row) as head and the remaining endpoint as tail\\
    
    \quad E\ref*{cond:expanded-contract}: contract to head, \texttt{Contracted}; \tcp*{occupies $h$\quad\hspace{1pt}}
    \quad E\ref*{cond:expand-lower-head}: contract to head, expand to $v$, \texttt{Expanded}; \tcp*{occupies $h,v$\hspace{2pt}}

    \quad E\ref*{cond:expand-lower-tail}: contract to tail, expand to $u$, \texttt{Expanded}; \tcp*{occupies $t,u$\hspace{3pt}}

    \quad E\ref*{cond:expand-other-diagonal}: contract to head, expand to $w$, \texttt{Expanded}; \tcp*{occupies $h,w$}

    \BlankLine
    \nonl\underline{In state \texttt{Contracted}:}\\

    \quad C\ref*{cond:contracted-expand}: expand to empty lower neighbour, \texttt{Expanded}; \\

    \quad C\ref*{cond:contracted-lowermost-right}: expand to empty right neighbour, \texttt{Expanded}; \\
\end{algorithm}

The ability of a particle to distinguish whether two of its neighbouring nodes are occupied by a single expanded particle or by two distinct particles is essential for the evaluation of \eref{cond:expand-other-diagonal}, which in turn allows progress in configurations such as the one depicted in Figure \ref{fig:pendulum}. If a particle satisfying \eref{cond:expand-other-diagonal} were to move without this distinction, an unfair scheduler could repeatedly activate only that particle, causing it to oscillate between two nodes indefinitely and preventing any progress.

Observe that if a particle satisfies \eref{cond:expanded-contract}, \eref{cond:expand-other-diagonal} or \cref{cond:contracted-lowermost-right} no endpoint of the particle occupies a lower node after its activation.  By definition, an expanded particle can only move one of its endpoints to a lower node through \eref{cond:expand-lower-head} or \eref{cond:expand-lower-tail} if the expanded particle has a lower neighbour and a contracted particle only expands to a lower node if it has a lower neighbour by definition of \cref{cond:contracted-expand}. Therefore:

\begin{observation} \label{obs:lower-neighbour}
A contracted particle $p$ that occupies some node $v$ can only occupy a node lower than $v$ after a move if $p$ has a lower neighbour when it is activated.
\end{observation}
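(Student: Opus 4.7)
The proof is essentially a direct case analysis on the two conditions \ref{cond:contracted-expand} and \ref{cond:contracted-lowermost-right} available to a contracted particle in Algorithm~\ref{alg:LE-movement}, together with the fixed meaning of the directions given by Figure~\ref{fig:compass}. My plan is to enumerate what $p$ can do when activated while contracted, and to check in each case whether any newly occupied node can lie strictly below $v$.

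First, I would dispose of the trivial case: if neither \ref{cond:contracted-expand} nor \ref{cond:contracted-lowermost-right} is satisfied, then by the pseudocode $p$ performs no operation and the set of nodes it occupies after the move is $\{v\}$, which contains no node lower than $v$. Second, I would handle \ref{cond:contracted-lowermost-right}: the rule tells $p$ to expand into the empty right neighbour, i.e., the node in direction~$0$. By the common orientation, direction~$0$ lies in the same row as $v$, so the set of occupied nodes after the expansion is $\{v, u_0\}$ with $u_0$ on the same row as $v$, and again no node strictly below $v$ is occupied.

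Third, I would handle \ref{cond:contracted-expand}: here $p$ expands into the empty lower neighbour, so after the move it does occupy a node strictly lower than $v$. But the premise of \ref{cond:contracted-expand} explicitly requires that exactly one of the two lower neighbours (directions~$1$ and~$2$) is occupied, which means $p$ has a lower neighbour at the moment of activation. Combining the three cases yields the observation.

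The argument is essentially bookkeeping, so I do not anticipate a real obstacle; the only subtlety worth flagging is the order of evaluation of the conditions. Because \ref{cond:contracted-lowermost-right} admits both the ``both lower neighbours empty'' and ``both lower neighbours occupied'' sub-cases, one has to note that it is the ``both empty'' sub-case that is compatible with $p$ having no lower neighbour, while the ``both occupied'' sub-case would give $p$ a lower neighbour but still only triggers a horizontal expansion in direction~$0$, never a downward one. Making this explicit ensures that occupying a lower node is attributable exclusively to \ref{cond:contracted-expand}, which guarantees the existence of a lower neighbour at activation time.
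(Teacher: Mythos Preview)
Your proposal is correct and takes essentially the same approach as the paper: the paper's justification, given in the paragraph immediately preceding the observation, is simply that ``a contracted particle only expands to a lower node if it has a lower neighbour by definition of \ref{cond:contracted-expand}'', which is exactly your case analysis carried out in more detail. Your additional remark about the order of evaluation is harmless but unnecessary here, since the premises of \ref{cond:contracted-expand} and \ref{cond:contracted-lowermost-right} are mutually exclusive.
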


\section{Proof of Correctness}\label{sec:correctness}
We show that the particle system remains connected and that there is an activable particle in every step or there is a unique particle that does not have a neighbour in directions 0,1,2 and 5. We call a configuration with no activable particle a \textit{final} configuration. By slight abuse of notation we refer to a node occupied by a particle $p$ as $p$. Throughout this section we use $h$ and $t$ to denote the head and tail of an expanded particle. When the particle we refer to is not clear from the context, we write $p_h,p_t$ instead of $h,t$ for the head and tail of a particle $p$. We call the node at direction $i$ of $h$ (resp. of $t$) $h_i$ (resp. $t_i$), regardless of whether it is empty or occupied. 

We will show that every final configuration has the following property. 
\begin{proposition}\label{prop:final}
    Every particle in a final configuration is in one of the following cases:
    \begin{enumerate}
        \item\label{case:contracted} Contracted and has either zero or two lower neighbours.
        \item Horizontally expanded, contracting disconnects its neighbours and has no lower neighbours.
    \end{enumerate}
\end{proposition}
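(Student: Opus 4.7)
The plan is to prove each case by contrapositive: if a particle $p$ does not fit case~\ref{case:contracted} or case~2, then at least one of the activation conditions \ref{cond:contracted-expand}--\ref{cond:contracted-lowermost-right} or \ref{cond:expanded-contract}--\ref{cond:expand-other-diagonal} triggers on $p$, so $p$ is activable and the configuration is not final. The contracted case is immediate from \ref{cond:contracted-expand}: whenever a contracted particle has a number of occupied lower neighbours different from $0$ and $2$, it has exactly one, and \ref{cond:contracted-expand} fires; note that \ref{cond:contracted-lowermost-right} is not needed for this direction of the argument.

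For the expanded case, let $h,t$ denote the head and tail of $p$ and $v,w$ the lower and upper common neighbours of $h$ and $t$. Since $p$ is non-activable, \ref{cond:expanded-contract} fails, which is exactly the ``contracting disconnects its neighbours'' clause of case~2. What remains is to show that $p$ must be horizontally expanded and have no lower neighbour. I would split into two subcases: (A) $p$ is diagonally expanded, and (B) $p$ is horizontally expanded and has a lower neighbour, and in each subcase exhibit an applicable rule among \ref{cond:expand-lower-head}, \ref{cond:expand-lower-tail}, \ref{cond:expand-other-diagonal}, contradicting finality.

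In subcase~(A), the head-assignment convention places $h$ strictly below $t$, so $v$ lies on $h$'s row and $w$ on $t$'s row. Because \ref{cond:expanded-contract} fails, some $q \in N(t)$ has no path to $h$ through $N(p)\setminus\{t\}$; since $v$ and $w$ are both adjacent to $h$, such a $q$ must occupy one of the three neighbours of $t$ that are not shared with $h$ (the one on $t$'s row on the side away from $h$, and the two upper neighbours of $t$). The key dichotomy is the occupancy of $v$: whenever $v$ is empty I verify that the move in \ref{cond:expand-lower-head} reconnects every remaining neighbour of $t$ via $v$ and $h$, so \ref{cond:expand-lower-head} applies; whenever $v$ is occupied $v$ itself bridges $N(h)$ and $N(t)$, and the only way $q$ can remain separated from $h$ is for $q$ to sit in a single specific upper slot of $t$ forming, together with another occupied neighbour of $t$, an expanded particle whose lower common neighbour is $t$. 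In that corner configuration $w$ must also be empty (otherwise $q$ would reach $h$ through $w$), and hence \ref{cond:expand-other-diagonal} fires. Subcase~(B) is simpler: the lower neighbour witnesses an occupied node on the row below $ht$, and an analogous separation argument shows that either $v$ is empty and \ref{cond:expand-lower-head} applies, or some lower neighbour of $t$ distinct from $v$ is empty and \ref{cond:expand-lower-tail} applies.

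The main obstacle I anticipate is the bookkeeping in subcase~(A) when \ref{cond:expanded-contract} and \ref{cond:expand-lower-head} both fail with $v$ occupied; specifically, verifying that the ``$q$ stuck above $t$'' pattern always forces the expanded neighbour of $t$ needed to trigger \ref{cond:expand-other-diagonal} with $w$ empty. This will require a careful enumeration of the occupancy patterns of the three exclusive neighbours of $t$ together with $v$ and $w$, combined with a geometric argument that rules out any residual configuration in which all four rules \ref{cond:expanded-contract}--\ref{cond:expand-other-diagonal} simultaneously fail.
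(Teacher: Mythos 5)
Your core claim --- that a particle violating both cases is \emph{itself} activable --- is false, and this breaks the whole plan. Concretely, using the paper's notation $t_i$ for the node in direction $i$ of the tail, take $p$ horizontally expanded with $t_1 (= h_2 = v)$, $t_2$ and $t_4$ occupied, and $t_3$, $t_5 (= w)$, $h_1$, $h_0$, $h_5$ empty. Then $p$ has lower neighbours, so it violates case 2; yet \ref{cond:expanded-contract} fails (the particle at $t_4$ has no path to $h$ inside $N(p)\setminus\{t\}$ since $t_3$ and $t_5$ are empty), \ref{cond:expand-lower-head} fails ($v$ is occupied), \ref{cond:expand-lower-tail} fails (both lower neighbours of $t$ are occupied), and \ref{cond:expand-other-diagonal} fails ($t_5$ is empty, so $t$ is not the lower common neighbour of an expanded particle). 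The same phenomenon kills your subcase~(A) dichotomy: for $p$ expanded towards direction 1 with $v=t_2$ and $t_4$ occupied but $t_3$, $t_5$, $w=t_0$ empty, all of \ref{cond:expanded-contract}, \ref{cond:expand-lower-head}, \ref{cond:expand-other-diagonal} fail even though the separated neighbour does not form your ``expanded particle above $t$'' pattern. So the ``geometric argument ruling out any residual configuration where all four rules fail'' that you hope for does not exist; in these residual configurations the activable particle is a \emph{neighbour} of $p$ (e.g.\ the particle at $t_4$, which has exactly one occupied lower neighbour), not $p$ itself.

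The missing idea is the one the paper's proof is built around: argue about the topmost--leftmost particle violating the proposition (equivalently, an induction in which the proposition is assumed for every particle above $p$ or at the same height to its left, Lemma~\ref{lem:induction-step}). That hypothesis is what eliminates the problematic neighbourhoods --- it forces implications such as ``$t_3$ empty $\Rightarrow$ $t_4$ empty'', ``$t_4$ occupied $\Rightarrow$ $t_3$ occupied'', ``$h_5$ occupied $\Rightarrow$ $h_0$ occupied'' --- and even then, in some remaining cases (e.g.\ horizontal $p$ with $h_2=t_1$ occupied and $h_1$ empty) the paper does not show $p$ activable but instead shows that the neighbour occupying $h_2$ is activable via \ref{cond:contracted-expand}, \ref{cond:contracted-lowermost-right} or \ref{cond:expand-other-diagonal}. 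A purely local, per-particle contrapositive of the kind you propose cannot be completed; you need either this extremal/inductive structure or some other mechanism that lets you invoke finality for particles other than $p$.
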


We first establish the following lemma.

\begin{lemma}\label{lem:induction-step}
    Consider a particle $p$ in a final configuration $C$ s.t. for any $p'$ above $p$ or $p'$ at the same height but on the left of $p$, Proposition \ref{prop:final} holds. Then Proposition \ref{prop:final} holds for $p$.
\end{lemma}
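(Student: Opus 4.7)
The plan is to argue by contradiction: suppose $p$ lies in the final configuration $C$ but violates Proposition~\ref{prop:final}. Since $C$ is final, $p$ is not activable, so, depending on its state, none of \ref{cond:expanded-contract}--\ref{cond:expand-other-diagonal} (if $p$ is expanded) or \ref{cond:contracted-expand}--\ref{cond:contracted-lowermost-right} (if $p$ is contracted) is satisfied at $p$. I would split on the state of $p$ and the way it could fail Proposition~\ref{prop:final}, and derive a contradiction in each subcase.

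First I would dispose of the contracted case, which is almost immediate. If $p$ is contracted with exactly one lower neighbour then \ref{cond:contracted-expand} fires, contradicting finality; so $p$ must have zero or two lower neighbours, which is exactly case~\ref{case:contracted} of Proposition~\ref{prop:final}. No use of the inductive hypothesis is needed in this branch.

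Next I would treat the diagonally expanded case. Let $v$ and $w$ be the lower and upper common neighbours of $h$ and $t$ respectively. Because \ref{cond:expanded-contract} fails, some particle $q\in N(t)\setminus\{h\}$ is cut off from $h$ in the subgraph induced by the occupied nodes of $N(p)\setminus\{t\}$. The first key move is to use the inductive hypothesis to kill \ref{cond:expand-other-diagonal}: if $t$ were the lower common neighbour of some horizontally expanded particle $q'$, then $q'$ would sit strictly above $p$, and by the inductive hypothesis $q'$ would be in case~2 of Proposition~\ref{prop:final}, hence with no lower neighbours; yet $t$ is itself a lower neighbour of $q'$, a contradiction. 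With the hypothesis of \ref{cond:expand-other-diagonal} ruled out, I would combine the failures of \ref{cond:expanded-contract} and \ref{cond:expand-lower-head} geometrically: the blocking particle $q$ must lie among the upper neighbours $t_0,t_4,t_5$ of the tail, and applying the inductive hypothesis to whichever of these are occupied forces a structure (contracted with zero or two lower neighbours, or horizontally expanded without lower neighbours) that still provides an occupied path from $q$ to a neighbour of $h$ inside $N(p)\setminus\{t\}$, contradicting the failure of \ref{cond:expanded-contract}.

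Finally I would address the horizontally expanded case in which $p$ has at least one lower neighbour. The simultaneous failures of \ref{cond:expanded-contract}, \ref{cond:expand-lower-head} and \ref{cond:expand-lower-tail} pin down the occupancy pattern in $N(p)$ very tightly; splitting on whether the lower common neighbour $v$ is occupied or empty, I would show that in each sub-case the inductive hypothesis applied to the particles above $p$ either forces the connectivity requirement of one of \ref{cond:expand-lower-head}, \ref{cond:expand-lower-tail} to hold after all, or produces a particle above $p$ violating Proposition~\ref{prop:final} and contradicting the inductive hypothesis. The main obstacle I anticipate is exactly this combinatorial case analysis: in both the diagonal and the horizontal expanded branches there are several patterns of occupied nodes around $p$ compatible with the failure of \ref{cond:expanded-contract}, and each must be matched against the rigid top-row structure imposed by the inductive hypothesis. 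I expect the cleanest organization to factor out a small preliminary claim characterising which occupancy patterns of $N(t)\setminus\{h\}$ can cause the disconnection witnessed by the failure of \ref{cond:expanded-contract}, so that the subsequent analysis shares a uniform skeleton across the two expanded cases.
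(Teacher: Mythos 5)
Your overall skeleton (contradiction with finality, case split on contracted / diagonally expanded / horizontally expanded, inductive hypothesis constraining particles above and to the upper-left) matches the paper's proof, and your contracted branch is exactly the paper's. The genuine gap is in the horizontally expanded branch, in precisely the subcase the algorithm's rule \ref{cond:expand-other-diagonal} was designed for. Take $p$ horizontally expanded with $t_3$ occupied, $t_2$ empty, the lower common neighbour $h_2=t_1$ occupied and $h_1$ empty (say $h_0$ occupied, $h_4,h_5,t_4$ empty). Then $p$ does have a lower neighbour, so it violates Proposition~\ref{prop:final}; yet none of \ref{cond:expanded-contract}--\ref{cond:expand-other-diagonal} fires for $p$: \ref{cond:expand-lower-head} is unavailable because $v=h_2$ is occupied, and the connectivity demanded by \ref{cond:expand-lower-tail} fails since $h_0$ has no occupied path into $N[t]\setminus\{h\}$. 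Moreover no particle above $p$ or to its upper-left need violate Proposition~\ref{prop:final} (the only relevant occupied nodes besides $p$ are $t_3$, consistent with case~1, and nodes at or below the row of $h$). So both horns of your claimed dichotomy --- ``the inductive hypothesis forces the connectivity requirement of \ref{cond:expand-lower-head} or \ref{cond:expand-lower-tail}'' or ``some particle above $p$ violates Proposition~\ref{prop:final}'' --- fail. The paper closes this case differently: it shows that the particle $q$ occupying $h_2$, which lies \emph{below} $p$ and hence outside the reach of the inductive hypothesis, is itself activable (by \ref{cond:contracted-expand} or \ref{cond:contracted-lowermost-right} if contracted, and by \ref{cond:expand-other-diagonal} if diagonally expanded --- the ``blocking particle'' situation), contradicting finality of $C$. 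Deriving the contradiction from the activability of a particle other than $p$ is an essential ingredient that your plan never invokes; indeed you treat \ref{cond:expand-other-diagonal} only as a hypothesis to be ``killed'' by the inductive hypothesis, which misses its actual role in this lemma.

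A secondary inaccuracy lies in your diagonally expanded branch. The cut-off neighbour of $t$ that sits in the same row as $t$ on the far side from the head ($t_3$ in the paper's orientation, $t_0$ in yours) is not controlled the way you claim: in your orientation the inductive hypothesis does not even apply to it (it is at the same height as $p$ but to the right), and when it is occupied while the lower common neighbour is empty, \ref{cond:expanded-contract} genuinely fails; the contradiction must come from \ref{cond:expand-lower-head} firing with $v$ the empty lower common neighbour, not from ``a path contradicting the failure of \ref{cond:expanded-contract}.'' Your remark about combining the failures of \ref{cond:expanded-contract} and \ref{cond:expand-lower-head} suggests this part is repairable, but as stated the mechanism is wrong; the horizontal subcase above is the substantive missing idea.
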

\begin{proof}
    It is easy to see that any contracted particle with one empty and one occupied lower neighbouring node is activable due to \cref{cond:contracted-expand}. Hence we immediately get that if $p$ is contracted in a final configuration $C$ it has either zero or two lower neighbours. We now show that if $p$ has a lower neighbour in $C$, $p$ is contracted. We split the proof into two cases based on whether $p$ is expanded horizontally or diagonally and we show that either $p$ is expanded horizontally with no lower neighbours in $C$ or we get a contradiction to $C$ being final.

    \begin{case}\label{case:exp-horizontally}
        Particle $p$ is expanded horizontally.
    \end{case}
    If the occupied neighbourhood of $p$ is connected, $p$ is activable due to \eref{cond:expanded-contract} and $C$ is not final. So let us assume that $N(p)$ is not connected and consider cases based on $N(p)$. 

    \begin{subcase}\label{case:exp-horizontally-t4-empty}
        Neighbour $t_3$ is empty.
    \end{subcase}
    Observe that in this case $t_4$ must be empty since by assumption the lemma holds for $t_4$ (i.e., a particle at $t_4$ cannot have only one lower neighbour regardless of whether it is occupied by an expanded or a contracted particle). The only remaining neighbour of $t$ that is not a neighbour of $h$ is $t_2$. We consider two cases based on whether $t_2$ is occupied. This is shown in Figure \ref{fig:ind-step-horizontal-t2}.
    \begin{figure}[ht]
        \begin{subfigure}[t]{0.49\textwidth}
            \centering
            \includegraphics[scale=.65,page=5]{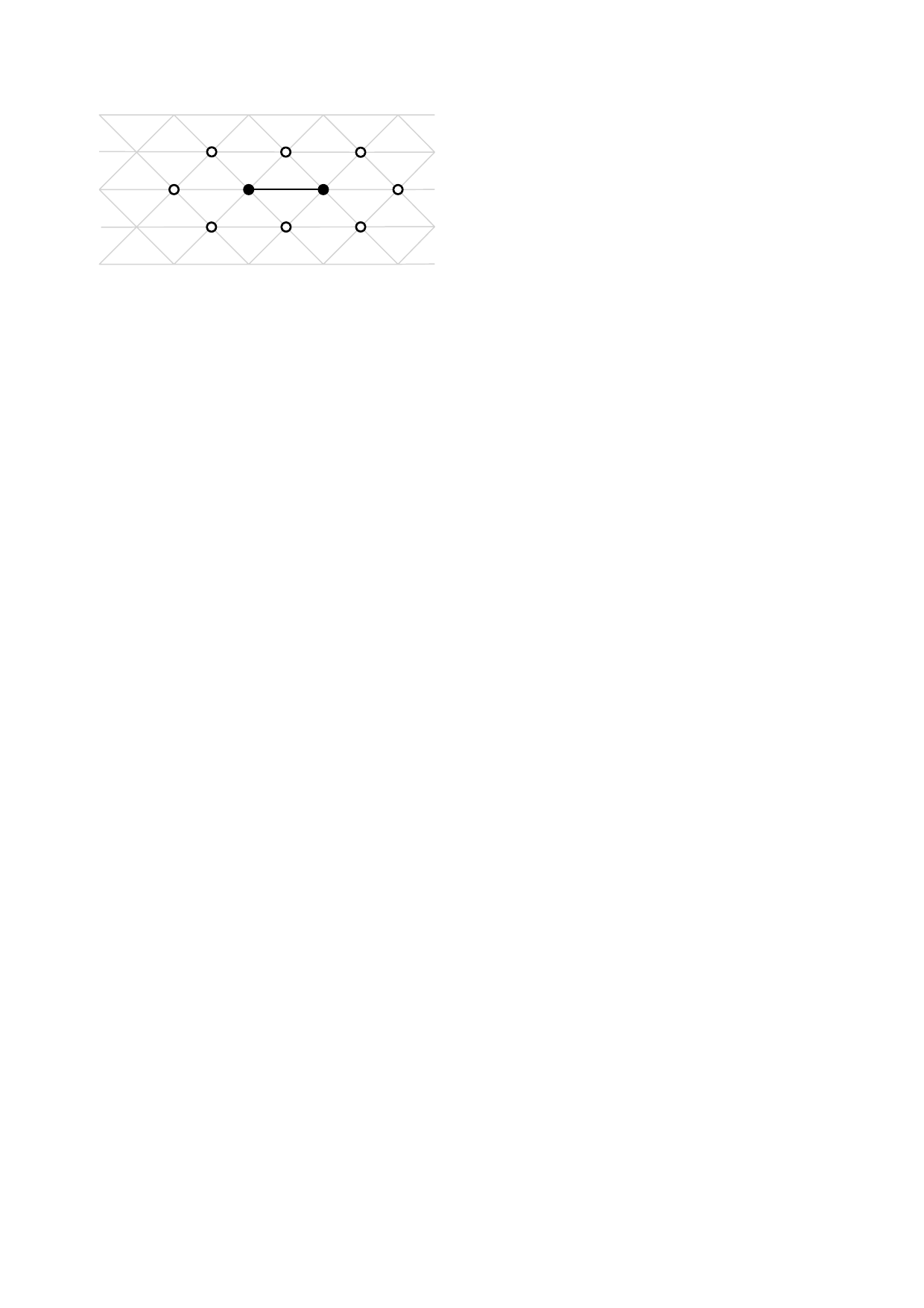}
            \caption{\centering $t_2$ empty}
            \label{fig:ind-step-horizontal-t2-empty}
        \end{subfigure}
        \begin{subfigure}[t]{0.49\textwidth}
            \centering
            \includegraphics[scale=.65,page=6]{ind_step.pdf}
            \caption{\centering $t_2$ occupied}
            \label{fig:ind-step-horizontal-t2-occ}
        \end{subfigure}
        \caption{Black circles represent particles. Two black circles connected with a line represent an expanded particle. White circles can be empty or occupied by particles. Nodes marked by \texttt{x} are assumed to be empty.}
        \label{fig:ind-step-horizontal-t2}
    \end{figure}
    Let us begin by assuming $t_2$ is empty. Then $p$ is activable by \eref{cond:expanded-contract} as all occupied neighbours of $t$ are also neighbours of $h$. So let us assume that $t_2$ is occupied. Then if $t_1$ is empty, $p$ is activable by \eref{cond:expand-lower-head}. The resulting configuration is connected as all neighbours of $h$ are still neighbouring to $p$ and $t_2$ is neighbouring to $t_1$ that is now occupied by $p$. So let us assume that $t_1$ is occupied. Then since $t_2$ is neighbouring to $t_1$, all neighbours of $p$ are connected to $h$ and $p$ is activable by \eref{cond:expanded-contract}. 

    \begin{subcase}\label{case:exp-horizontally-t4-occ}
        Neighbour $t_3$ is occupied.
    \end{subcase}
    \begin{figure}[ht]
        \begin{subfigure}[t]{0.32\textwidth}
            \centering
            \includegraphics[scale=.6,page=3]{ind_step.pdf}
            \caption{\centering $t_3$ contracted}
            \label{fig:ind-step-horizontal-cont-2}
        \end{subfigure}\hfill
        \begin{subfigure}[t]{0.32\textwidth}
            \centering
            \includegraphics[scale=.6,page=2]{ind_step.pdf}
            \caption{\centering $t_3$ contracted}
            \label{fig:ind-step-horizontal-cont-0}
        \end{subfigure} \hfill
        \begin{subfigure}[t]{.32\textwidth}
            \centering
            \includegraphics[scale=.6,page=4]{ind_step.pdf}
            \caption{\centering $t_3$ expanded}
            \label{fig:ind-step-horizontal-exp}
        \end{subfigure}
        \caption{Black circles represent particles. Two black circles connected with a line represent an expanded particle. White circles can be empty or occupied by particles. Nodes marked by \texttt{x} are assumed to be empty.}
    \end{figure}
      
    Let us first assume that $t_2$ is occupied (i.e., Figure \ref{fig:ind-step-horizontal-cont-2}). If $t_1$ is occupied, $p$ is activable due to \eref{cond:expanded-contract}. The resulting configuration is connected due to the neighbourhood of $t$ being connected and $h$ neighbouring to $t_1$ which is also neighbouring to the occupied neighbour of $t$, $t_2$. If $t_1$ is empty, $p$ is activable due to \eref{cond:expand-lower-head}. This is a valid configuration as $h$ is still occupied by $p$ so neighbours of $h$ remain connected, the neighbourhood of $t$ is connected and $t_1$ that is now occupied by $p$ is neighbouring to the occupied neighbour of $t$, $t_2$. 
    
    Let us now assume that $t_3$ does not have lower neighbours, or equivalently that $t_2$ is not occupied since the lemma holds for $t_3$. Notice that since $p$ cannot know whether $t_3$ is contracted or expanded even though two cases are possible for $t_3$ as shown in Figures \ref{fig:ind-step-horizontal-cont-0} and \ref{fig:ind-step-horizontal-exp} the analysis for both is common. If $h_2 = t_1$ is occupied and $h_1$ is empty, we consider two cases based on whether the particle occupying $h_2$, $q$, is contracted or expanded. If $q$ is contracted, $q$ is activable from \cref{cond:contracted-expand} if it has a unique lower neighbour or from \cref{cond:contracted-lowermost-right} due to $h$, otherwise. As a contracted particle expanding cannot disconnect the system the resulting configuration is connected. If $q$ is expanded, it must be expanded diagonally and $q$ is activable from \eref{cond:expand-other-diagonal}. Since both $t_2$ and $h_1$ are empty in this case $q$ moving does not disconnect its neighbourhood as the only neighbour of $q_t$ that is not a neighbour of $q_h$ is $p$ which is still a neighbour of $q$ in the new configuration. Let us now assume that $h_1$ is occupied and $h_2$ is empty. Then, if $h_5$ is occupied by the lemma statement $h_0$ must also be occupied. Therefore, for every neighbour of $h$ that is not a neighbour of $t$ there is a path to $h_1$ that only passes through occupied nodes in $N(h)$. In this case, $p$ is either activable by \eref{cond:expand-lower-tail} and in the resulting configuration occupies $t$ and $h_2$ or by \eref{cond:expanded-contract}. In the former case, as all neighbours of $t$ are still neighbouring to $t$, the neighbours of $h$ are connected and $h_2$ is neighbouring to the occupied neighbour of $h$, $h_1$ the resulting configuration is connected. The latter case is possible if $t_4$ and $t_5 = h_4$ are also occupied and the resulting configuration is connected since $t_3,t_4,t_5$ are connected and $t_5$ is neighbouring to $h$. Suppose now that $h_1$ and $h_2$ are both occupied. Then $p$ is activable either by \eref{cond:expand-lower-tail} and in the resulting configuration occupies $t$ and $t_2$ or by \eref{cond:expanded-contract}. In the former case, if $h_5$ is occupied, $h_0$ must also be occupied by the lemma statement. Since $h_0$ is neighbouring to $h_1$, for every neighbour of $h$ that is not a neighbour of $t$ there is a path to $t_1=h_2$ and this path only passes through occupied nodes in $N(h)$. Since $t$ is neighbouring to $t_1$ and is additionally connected to all its neighbours, the resulting configuration is connected. The latter case is possible if $t_5 = h_4$ and $t_4$ are both occupied and the resulting configuration is connected since $t_3,t_4,t_5$ are connected and $t_5$ is neighbouring to $h$. Finally, if both $h_2$ and $h_1$ are empty Proposition \ref{prop:final} holds. 

    Therefore, if $p$ is expanded horizontally, Proposition \ref{prop:final} holds.

    \begin{case}\label{case:exp-diagonally}
        Particle $p$ is expanded diagonally.
    \end{case}
    We consider the case of $p$ being expanded towards direction 1 (i.e., $h_4 = t$ and $h = t_1$) and the other case is symmetric. If the occupied neighbourhood of $p$ is connected without $t$, $p$ is activable due to \eref{cond:expanded-contract}. So we assume that the occupied neighbourhood of $p$ is not connected without $t$. The only neighbours of $t$ that are not neighbours of $h$ are $t_3$, $t_4$ and $t_5$. By assumption, if $t_4$ is occupied, $t_3$ must be occupied and if $t_5$ is occupied, $t_0$ must be occupied. Hence we do not need to consider $t_4$ separately to $t_3$ or $t_5$ separately to $t_0$. Furthermore, $t_0$ is neighbouring to $h$ so if $t_5$ is occupied it is always connected to the neighbourhood of $h$. Therefore, the only neighbours of $p$ that we need to consider are $t_2,t_3$. If $t_3$ is not occupied or if $t_3$ is occupied and $t_2$ is occupied, $p$ is immediately activable by \eref{cond:expanded-contract}. The resulting configuration is valid as in either case if $t_2$ is occupied, it is a neighbour of $h$. So let us suppose that $t_3$ is occupied and $t_2$ is not occupied. Then $p$ is activable by \eref{cond:expand-lower-head}. The resulting configuration is connected as the neighbours of $t$ are connected and $t$ now occupies $t_2$ that is neighbouring to the occupied node $t_3$.  

    From Case \ref{case:exp-horizontally} and Case \ref{case:exp-diagonally} if $p$ is expanded, either it has no lower neighbour as is the second case of Proposition \ref{prop:final} or $C$ is not final which is a contradiction.
\end{proof}

We can now prove Proposition \ref{prop:final}.

\begin{proof}[Proof of Proposition \ref{prop:final}]
    Suppose now that Proposition \ref{prop:final} does not hold for all particles. Let $q$ be the topmost leftmost particle among the particles for which Proposition \ref{prop:final} does not hold. Then for all particles above $q$ and in the same height but on the left of $q$ the property holds. From Lemma \ref{lem:induction-step}, Proposition \ref{prop:final} then also holds for $q$, which is a contradiction. Hence, the proposition holds.
\end{proof}

We now show that the final configuration has a unique leader. 

\begin{lemma} \label{lem:unique-leader}
    A final configuration contains exactly one leader, that is, a particle that does not have any neighbours in directions 0,1,2 and 5.
\end{lemma}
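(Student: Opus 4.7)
The plan is to pinpoint the unique leader as the particle occupying position $(r_{0},K_{0})$, where $r_{0}$ is the lowest occupied row and $K_{0}$ is the largest column in that row. For existence, I would first argue that in a final configuration this particle $\ell^{*}$ is contracted: a horizontally expanded $\ell^{*}$ with head at $(r_{0},K_{0})$ and tail at $(r_{0},K_{0}-1)$ would have its head surrounded by empty positions (directions $0,1,2$ by the rightmost-bottommost choice, and direction $5$ by Proposition~\ref{prop:final}, since a particle there would have exactly one lower neighbour), leaving at most one occupied external neighbour (the tail-side direction-$5$ node, which itself is directly adjacent to the head), so one of \ref{cond:expanded-contract} or \ref{cond:expand-lower-tail} would apply and contradict finality. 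With $\ell^{*}$ contracted, directions $1,2$ are empty (below $r_{0}$), direction $0$ is empty by the choice of $K_{0}$, and a particle at the direction-$5$ position $(r_{0}+1,K_{0})$ would again have exactly one lower neighbour ($\ell^{*}$ itself), contradicting both cases of Proposition~\ref{prop:final}. Hence $\ell^{*}$ is a leader.

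For uniqueness I suppose a second leader $\ell$ at $(r,k)$ exists; the same argument shows $\ell$ is contracted. The key structural consequence of Proposition~\ref{prop:final} I would extract is that no particle in a final configuration has exactly one lower neighbour. Combined with $\ell$'s four empty neighbours in directions $0,1,2,5$, this forces the maximal row-$r$ chain extending leftwards from $\ell$ to consist of contracted particles each with zero lower neighbours: the direction-$1$ position of each chain entry is empty because the preceding entry's direction-$2$ position is empty, propagating from $\ell$'s leader-enforced emptiness at direction $2$. Therefore row $r-1$ is empty directly beneath this chain. Symmetrically, any candidate particle in row $r+1$ directly above the chain boundary would acquire exactly one lower neighbour, so such positions are empty too.

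The uniqueness contradiction then splits into two cases. If $\ell$ and $\ell^{*}$ share the bottom row $r_{0}$, the direction-$0$ emptiness at $\ell$ creates a horizontal gap whose upward propagation (by the same no-mixed-lower rule) disconnects $\ell^{*}$ from $\ell$'s component, violating connectivity. If $\ell$ lies strictly above $r_{0}$, iterating the row-$(r-1)$ emptiness downward disconnects $\ell$'s component from row $r_{0}$, again violating connectivity. The main obstacle will be the detour case: a connecting path could attempt to climb above row $r$ and descend on the far side of the gap. Every such attempted descent must, at the row where it re-enters the region beneath $\ell$'s chain, place a particle with exactly one lower neighbour, forbidden by Proposition~\ref{prop:final}. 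Formalising this no-bridge observation --- that once a row has been cleared directly beneath a leader-chain it cannot be re-entered --- is the technical crux of the proof.
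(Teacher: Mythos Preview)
Your existence argument follows the paper's: pick the rightmost particle in the bottom row and use Proposition~\ref{prop:final} to rule out a direction-$5$ neighbour. Your extra step showing $\ell^{*}$ is contracted has a gap, though: when $\ell^{*}$ is horizontally expanded you only account for the tail's direction-$5$ neighbour and forget $t_{3},t_{4}$. This is fixable (in fact, if all of $h_{0},h_{1},h_{2},h_{5},t_{2},t_{5}$ are empty then condition~\ref{cond:expand-lower-tail} fires with $u=t_{2}$), but your stated reason is wrong.

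The real issue is uniqueness. Your plan is a connectivity/propagation argument: build the leftward chain from $\ell$, observe it has no lower neighbours, and then argue that the system is severed. But you yourself flag the obstacle --- a path from $\ell$ to $\ell^{*}$ may climb above the chain and descend elsewhere --- and your proposed fix (``the descent must place a particle with exactly one lower neighbour in the region beneath $\ell$'s chain'') is not correct as stated: the descent need not happen beneath the chain at all; it can happen far to the left and the path can then travel horizontally in lower rows towards $\ell^{*}$. Nothing in your outline prevents this, and ``formalising this no-bridge observation'' is exactly the whole difficulty, not a detail to be filled in later.

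The paper avoids this by a path argument rather than a component argument. It takes a shortest $p' \leadsto p$ path that, among all shortest paths, minimises the number of particles in its highest row, and then looks at the first particle $q'$ lying in that highest row. The predecessor of $q'$ must be one of its lower neighbours, so by Proposition~\ref{prop:final} both $q'_{1}$ and $q'_{2}$ are occupied; one can then replace $q'$ by one of these lower neighbours to obtain either a strictly shorter path or one with fewer highest-row particles, contradicting minimality. This single local replacement handles your ``detour case'' uniformly --- there is no need to track which region the descent enters. Your propagation approach could perhaps be pushed through with substantially more work, but as written it does not close the gap; the path-shortening idea is what is missing.
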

\begin{proof}
    Suppose a final configuration does not have a leader. Consider the lowest row of particles and take the rightmost particle in this row, $p$. By definition, $p$ does not have neighbours in directions 0,1 and 2. Hence, $p$ must have a neighbour in direction 5, $p_5$. From Proposition \ref{prop:final}, $p_5$ must have a second lower neighbour, but by assumption $p_0$ (i.e., the neighbour of $p$ in direction 0) is empty. So this scenario is not possible and $p$ is a leader in the final configuration. 

    Let us now suppose that there exist at least two leaders in the final configuration. Call the leader that is not globally lowermost rightmost $p'$. Then there exists a shortest path from $p'$ to $p$ that minimises the number of particles in the highest row of the path. We denote that path $p' \leadsto p$. Take $q$ to be the first particle on that path that has a lower neighbour. From Proposition \ref{prop:final}, since $q$ has a lower neighbour, it must be contracted and have a second lower neighbour. Notice that $q$ is either at the same row as $p'$ or higher. 
    
    Let us first assume that $q$ is in the same row as $p'$. Then the predecessor of $q$ on $p' \leadsto q$ cannot be the neighbour of $q$ in direction 0 or in direction 3 since both of those nodes share a lower common node with $q$, and $q$ would not be the first particle in $p' \leadsto p$ that has a lower neighbour. Furthermore, the predecessor of $q$ on $p' \leadsto p$ cannot be in direction 5 or 4 of $q$ as $q$ is a lower neighbour of those nodes and $q$ would not be the first particle on $p' \leadsto p$ to have a lower neighbour. Hence the predecessor of $q$ on the path must be one of the neighbours of $q$ in direction 1 or 2. However, both of those neighbours are lower than $p'$ and as a result $q$ would not be the first particle in the path to have a lower neighbour. Therefore, this case is not possible. 
    
    Let us now suppose that $q$ is at some row that is higher than $p'$. Since $p$ is a globally lowermost rightmost node $p' \leadsto p$ must eventually move down again. Take $q'$ to be the first particle of $p' \leadsto p$ that is at the highest row of the path. The predecessor of $q'$ on the path cannot be one of the neighbours of $q'$ in directions 0,3,4 or 5 otherwise $q'$ would not be the first particle to be in the highest row of the path. Therefore, the predecessor of $q'$ is one of the neighbours of $q'$ in directions 1 or 2. Notice that from Proposition \ref{prop:final} since one of the lower neighbours of $q'$ is occupied, both neighbours in directions 1 and 2 must be occupied. Let us now consider the successor of $q'$ in $p' \leadsto p$. Since $q'$ is at the highest row of the path, its successor on the path cannot be in direction 4 or 5. Without loss of generality, let us assume that the predecessor of $q'$ on the path is in direction 1 and the case for the predecessor being in direction 2 is symmetric. If the successor of $q'$ is in direction $j \in \{0,2\}$, $p' \leadsto q'_1,q'_j \leadsto p$ is a shorter path than $p' \leadsto q'_1,q',q'_j \leadsto p$. This contradicts $p' \leadsto p$ being a shortest path and as a result this case is not possible. Finally, suppose the successor is in direction 3. Then since $q'_2$ is occupied, $p' \leadsto q'_1, q'_2, q'_3 \leadsto p$ is a path with fewer particles in the highest row that does not include $q'$, a contradiction. Therefore, this case is not possible either.

    Consequently, a final configuration has exactly one leader.
\end{proof}

\begin{figure}[ht]
    \begin{subfigure}[t]{0.32\textwidth}
        \centering
        \includegraphics[scale=.45,page=1]{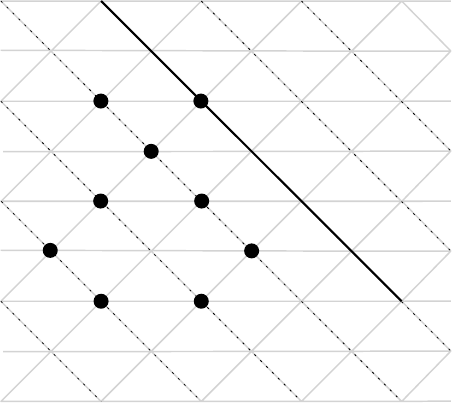}
        \caption{\centering}
        \label{fig:horizontal-axes}
    \end{subfigure}
    \begin{subfigure}[t]{0.34\textwidth}
        \centering
        \includegraphics[scale=.45,page=3]{progress.pdf}
        \caption{\centering}
        \label{fig:horizontal-dist}
    \end{subfigure}
    \begin{subfigure}[t]{0.3\textwidth}
        \centering
        \includegraphics[scale=.45,page=4]{progress.pdf}
        \caption{\centering}
        \label{fig:blocking}
    \end{subfigure}
    \caption{(a) Dotted lines represent all north-west to south-east diagonals on the grid and the black line represents the diagonal boundary. (b) The diagonal and horizontal boundary are marked by black lines, the particle represented as a black square is the particle that defines the diagonal boundary, an example of the distance to the diagonal boundary is shown for the two particles marked as white squares and an example of the distance to the horizontal boundary is shown for the particles marked as blue squares. (c) An example of a blocking particle marked by square endpoints.}
    \label{fig:progress}
\end{figure}

Finally, we show that a final configuration is reached from any starting configuration and for any order of activations. We call the lowest row that contains a particle the \textit{horizontal boundary} of the system. We define the \textit{vertical position} of a particle to be the head's distance to the horizontal boundary. For a contracted particle, the vertical position is the distance of the particle to the horizontal boundary. Consider all north-west to south-east diagonals of the grid (e.g., diagonals marked by a dotted line in Figure \ref{fig:horizontal-axes}). Call \textit{diagonal boundary} the rightmost such diagonal that contains a particle (e.g., the diagonal marked in Figures \ref{fig:horizontal-axes} and \ref{fig:horizontal-dist}). Observe that this particle is not necessarily a globally rightmost particle in the system. The \textit{horizontal position} of a particle is the horizontal distance of the particle to the diagonal boundary (e.g., Figure \ref{fig:horizontal-dist}). For a contracted particle the horizontal distance is the distance of the particle to the diagonal boundary and for an expanded particle the horizontal distance is the distance of the head to the diagonal boundary. Finally we write \textit{blocking particle} to denote a diagonally expanded particle whose tail is neighbouring to both the head and tail of an expanded particle (e.g., Figure \ref{fig:blocking}). 

We define \textit{progress} to be a decrease on one of the following criteria, evaluated in order. Once a criterion is satisfied, the remaining criteria are not evaluated.
\begin{enumerate}[P1]
    \item\label{prog:vertical} The sum of the vertical positions of the particles.
    \item\label{prog:horizontal} The sum of the horizontal positions of the particles.
    \item\label{prog:diag} The number of diagonally expanded particles.
    \item\label{prog:blocking} The number of blocking particles.
    \item\label{prog:horizontal-expanded} The number of horizontally expanded particles.
\end{enumerate}

The convergence argument below relies on invariants guaranteed by the movement rules defined in Algorithm \ref{alg:LE-movement}. In particular, particles only move to a lower row if a lower neighbour exists, do not cause the configuration to become disconnected and never move in directions 4 or 5. 

\begin{lemma}\label{lem:progress}
    Starting from any arbitrary initial configuration any sequential unfair execution of Algorithm \ref{alg:LE-movement} eventually reaches a final configuration.
\end{lemma}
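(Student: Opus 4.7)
The plan is to use the lexicographic tuple $(P_1, P_2, P_3, P_4, P_5)$ from the progress definition as a termination monovariant: I would show that every activation of Algorithm~\ref{alg:LE-movement} strictly decreases this tuple under lex order. Since each coordinate is a non-negative integer bounded above by a function of the initial configuration, well-foundedness of the lex order on $\mathbb{N}^5$ then forces the execution to reach, after finitely many steps, a state with no activable particle, which is by definition a final configuration.

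The core of the argument is a case analysis over the six activation rules, identifying in each case the first coordinate that changes and checking that the change is a decrease. Rule \ref{cond:contracted-expand} drops the activating head by one row, and the required lower occupied neighbour keeps the vertical boundary in place, so $P_1$ strictly decreases. Rule \ref{cond:contracted-lowermost-right} slides the head one step east while the required direction-5 neighbour, which lies on the NW--SE diagonal immediately east of the particle, pins the diagonal boundary; $P_1$ is unchanged and $P_2$ decreases. For rules \ref{cond:expand-lower-head} and \ref{cond:expand-lower-tail}, splitting on the orientation of $p$ gives one of two outcomes: either the new head moves down one row and $P_1$ decreases, or $p$ passes from diagonally to horizontally expanded at the same head row and $P_3$ decreases while $P_1, P_2$ stay fixed. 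The contraction \ref{cond:expanded-contract} leaves the head in place and decreases either $P_3$ or $P_5$ depending on whether $p$ was diagonally or horizontally expanded. Finally, rule \ref{cond:expand-other-diagonal} preserves the head, the vertical and horizontal positions, and the diagonal-expansion count, but its action destroys the blocking relation that enabled it, and a short geometric check on the location of the new tail $w$ shows that $w$ cannot simultaneously be the lower common neighbour of another horizontally expanded particle, so no new blocking relation is created and $P_4$ strictly decreases.

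The main obstacle is the bookkeeping of boundary shifts. A downward activation by a particle already on the current vertical boundary would introduce a new lowest row and raise every other particle's contribution to $P_1$ by one, potentially cancelling the one-row drop of the activating head; a symmetric worry affects $P_2$ for the east-moving rule. The crucial geometric observation is that the preconditions of each descending rule exclude precisely these bad configurations: the required lower occupied neighbour of \ref{cond:contracted-expand}, together with the connectivity hypotheses on $N(p)\setminus\{t\}\cup\{v\}$ in \ref{cond:expand-lower-head} and on $N(p)\setminus\{h\}\cup\{u\}$ in \ref{cond:expand-lower-tail}, force the new head to land at a row that is already populated by another particle, and the occupied direction-5 neighbour in \ref{cond:contracted-lowermost-right} plays the analogous role for the diagonal boundary. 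Verifying these stability claims rule by rule is the technical heart of the proof; once they are in hand the per-rule decreases computed above combine into a strict lexicographic decrease at every step, and well-foundedness of $\mathbb{N}^5$ delivers termination.
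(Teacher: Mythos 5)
Your proposal takes essentially the same route as the paper's proof: the identical lexicographic measure \ref{prog:vertical}--\ref{prog:horizontal-expanded} with the same per-rule case analysis, combined with the key point that the vertical and diagonal boundaries never shift downward/rightward (the paper gets this from Observation~\ref{obs:lower-neighbour} plus a first-particle-to-cross argument for direction-0 moves, you from the rules' preconditions pinning the boundaries), after which well-foundedness of the lexicographic order yields termination. The only cosmetic differences are that the paper notes a horizontally expanded particle firing \ref{cond:expanded-contract} may decrease \ref{prog:blocking} rather than \ref{prog:horizontal-expanded}, and it leaves implicit the ``no new blocking particle'' check for \ref{cond:expand-other-diagonal} that you spell out.
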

\begin{proof}
    We prove this lemma by first showing that if an activable particle moves there is progress. We consider all cases of activable particles. 
    \begin{itemize}
        \item A particle that satisfies \cref{cond:contracted-expand}: If a particle expands diagonally, by construction of the algorithm the particle expands to a lower node and as a result \pref{prog:vertical} is decreased.
        \item A particle that satisfies \cref{cond:contracted-lowermost-right}: In this case a contracted particle expands towards direction 0, decreasing the horizontal distance of the particle to the diagonal boundary, that is, decreasing \pref{prog:horizontal}. Furthermore, \cref{cond:contracted-lowermost-right} only applies to contracted particles moving to an empty neighbour in the same row hence \pref{prog:vertical} does not change. 
        \item A particle that satisfies \eref{cond:expanded-contract}: There are two possibilities for this case: a diagonally expanded particle contracts to a lower node or a horizontally expanded particle contracts to the right. In the former case \pref{prog:diag} is reduced. Additionally, since the vertical and horizontal positions of a particle are only determined by the position of the head, which in this case does not move, \pref{prog:vertical} and \pref{prog:horizontal} do not change. In the latter case, \pref{prog:blocking} or \pref{prog:horizontal-expanded} is reduced, depending on the configuration. Criteria \pref{prog:vertical}, \pref{prog:diag} are affected by particles being or becoming diagonally expanded and as a result do not change. Criterion \pref{prog:horizontal} is not reduced as a particle satisfying \eref{cond:expanded-contract} contracts to the node occupied by its head and as a result does not change its distance to the diagonal boundary. 
        \item A particle that satisfies \eref{cond:expand-lower-head} or \eref{cond:expand-lower-tail}: Once again, this condition can either be satisfied by a horizontally expanded or a diagonally expanded particle. If the particle is diagonally expanded it must satisfy \eref{cond:expand-lower-head}. In this case, the particle becomes horizontally expanded thus reducing \pref{prog:diag}. Notice that the position of the head does not change hence \pref{prog:vertical} and \pref{prog:horizontal} do not change. If the particle is expanded horizontally, the particle becomes diagonally expanded to a lower node decreasing \pref{prog:vertical}.
        \item A particle that satisfies \eref{cond:expand-other-diagonal}: In this case a diagonally expanded particle moves but remains diagonally expanded. The action corresponding to \eref{cond:expand-other-diagonal} reduces \pref{prog:blocking}. Furthermore, since the position of the head does not change, \pref{prog:vertical} and \pref{prog:horizontal} do not change and since the particle remains diagonally expanded, \pref{prog:diag} does not change.  
    \end{itemize}
    Therefore as long as there is an activable particle in the system,  progress is ensured. We also need to show that eventually there do not exist activable particles in the system. We prove this by showing that no particle moves lower than the horizontal boundary and no particle moves to the right of the diagonal boundary. From Observation \ref{obs:lower-neighbour}, a particle only moves to a lower row if it has a lower neighbour. Hence particles in the lowest row of the system never move downwards. By construction of the algorithm particles move to distance at most one during an activation so particles that are not on the diagonal boundary cannot cross it in one activation. So let us suppose that there exists a particle $p$ that is the first particle that can cross the diagonal boundary during an execution of Algorithm \ref{alg:LE-movement}. From the structure of the grid, $p$ cannot cross the diagonal boundary when expanding or contracting to a lower neighbour. The only way $p$ can cross the diagonal boundary is by moving to direction 5 or 0. In Algorithm \ref{alg:LE-movement}, particles never move to direction 5. Furthermore, in Algorithm \ref{alg:LE-movement}, a particle only moves to direction 0 only if it has a neighbour in direction 5 through condition \cref{cond:contracted-lowermost-right}. However, a neighbour in direction 5 is beyond the diagonal boundary. As we have assumed that $p$ is the first particle that crosses the diagonal boundary, this is a contradiction. 
    
    Since the positions that the particle system can occupy are bounded below and to the right and particles never move in any other direction, eventually no particle is activable.
\end{proof}

From Proposition \ref{prop:final} and Lemmas \ref{lem:unique-leader} and \ref{lem:progress} we get:

\begin{theorem}
    Algorithm \ref{alg:LE-movement} eventually reaches a final configuration that contains exactly one particle without neighbours in directions 0,1,2 and 5, in a silent and self--stabilising manner.
\end{theorem}

\begin{observation}\label{obs:marked-leader}
    Algorithm \ref{alg:LE-movement} can be trivially transformed to an algorithm where the leader is explicitly marked.
\end{observation}
One way to perform this transformation is by adding a single bit of memory to each particle that can be set to 0 to denote that a particle cannot be a leader based on local conditions and set to 1 otherwise. The correctness of our algorithm implies that eventually only one particle has its variable set to 1. 

\section{Conclusion}
We presented the first self--stabilising result in programmable matter for the case of an unfair scheduler and for the case of systems containing holes. As we now know that even in a self--stabilising setting, using the movement capabilities of particles can lead to solving problems that appear very challenging in the stationary setting, a natural direction for future work is studying what other problems would benefit from particle movement in a self--stabilising context. Another interesting direction would be to study self--stabilising leader election in connected systems under weaker assumptions. For example, one could consider substituting the agreement on all directions with a common sense of rotational orientation (i.e., chirality). A starting point could be to attempt to compact the system (i.e., ``fill in'' the holes). However, this setting presents additional challenges as the particles cannot locally recognise the inside and the outside of the configuration. Hence, oblivious particles may not be able to use such a method. In the non self--stabilising context there exist methods to calculate this information (e.g., \cite{dufoulon2021efficient,emek2019deterministic}) with constant memory but it is not clear whether a self--stabilising counterpart for these methods can be designed. A different example is considering a synchronous or asynchronous scheduler instead of a sequential one. Although Algorithm \ref{alg:LE-movement} works if the activated particles are not neighbouring, it is easy to see that this is not the case for neighbouring activable particles (e.g., Figure \ref{fig:counterexample}).

\begin{figure}
    \centering
    \begin{subfigure}[b]{0.23\textwidth}
        \centering
        \includegraphics[page=1,scale=.6]{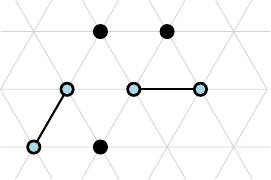}
        \caption{}
    \end{subfigure}
    \begin{subfigure}[b]{0.23\textwidth}
        \centering
        \includegraphics[page=2,scale=.6]{counterexample-not-seq.pdf}
        \caption{}
    \end{subfigure}
    \begin{subfigure}[b]{0.23\textwidth}
        \centering
        \includegraphics[page=3,scale=.6]{counterexample-not-seq.pdf}
        \caption{}
    \end{subfigure}
    \begin{subfigure}[b]{0.23\textwidth}
        \centering
        \includegraphics[page=4,scale=.6]{counterexample-not-seq.pdf}
        \caption{}
    \end{subfigure}
    \caption{Two examples of neighbouring particles activated simultaneously disconnecting the system. Suppose the blue expanded particles of Figure (a) (resp. Figure (c)) are simultaneously activated. Then \eref{cond:expanded-contract} is satisfied for both particles (resp. \eref{cond:expanded-contract} and \eref{cond:expand-other-diagonal} are satisfied for each expanded particle) and the resulting configuration becomes disconnected as shown in Figure (b) (resp. Figure (d)).}
    \label{fig:counterexample}
\end{figure}

The most interesting open question is whether it is possible to design a stationary self--stabilising leader election algorithm using constant memory for arbitrary connected configurations. We conjecture that the design of such an algorithm is impossible, even if particles agree on all directions of the grid.  

{
\fontsize{9pt}{\baselineskip} \selectfont
\subsubsection*{Acknowledgments.} JC was partially funded by
ANR project MIMETIQUE “Mineurs métriques” (ANR-25-CE48-4089-01). MK was supported by the DFG Project SCHE 1592/10-1.

\subsubsection*{Disclosure of Interests.} The authors have no competing interests to declare that are relevant to the content of this article.
}

\bibliographystyle{plainurl}
\bibliography{biblio}

\end{document}